\pgfplotsset{compat=newest}
\newtheorem{thm}{Theorem}[section]
\newtheorem*{theorem*}{Theorem}
\renewenvironment{proof}{\par\noindent{\bf Proof.}}{$\square$\par\bigskip}
\newtheorem*{lemma*}{Lemma}
\def\Re{\operatorname{Re}}
\def\exp{\operatorname{exp}}
\def\log{\operatorname{log}}
\def\diag{{\rm diag}}
\begin{document}

\date{\today}
\keywords{Brain network hubs, Markov chains, epilepsy}
\subjclass[2000]{60J10, 60J20.}
\thanks{Research of the first author is partially supported by an NSERC Discovery grant 03927.\\}

\title{Markov processes and brain network hubs}

\author[M. Ram Murty]{M. Ram Murty}

\author[Asuri Narayan Prasad]{Asuri Narayan Prasad}

\address{Department of Mathematics and Statistics, Queen's University, Kingston, Ontario, K7L 3N6. Canada}
\email{murty@queensu.ca}

\address{Department of Clinical Neurological Sciences, Western University, London, Ontario, Canada}
\email{narayan.prasad@lhsc.on.ca}

\maketitle

\section{\bf Introduction}

The human brain (along with its emergent functions) seems to
be the 
final frontier of science.    This exploration  has been difficult largely due to an inadequacy of tools.  Now, in the 21st century, we have made some technological
progress towards the understanding of neural networks and in this endeavour, the mathematical branch of graph theory has
proved to be useful as a language to model the inner cerebral architecture.  
\par
In this paper, we propose that the firing of neurons or more precisely, {\bf neural assemblies}, can be modelled as a Markov process.
(We will review basic Markov theory in section 5.)
This idea is not new and has been proposed earlier by others (see for example, Chen et.al. \cite{chen} and Escola et. al. \cite{escola}).  The novelty of this paper is to
apply the main theorems of Markov chains and determine their implications in our understanding
of  brain networks and to recognize ``brain hubs''.  We will model these networks as directed graphs and so subtler aspects of Markov theory have
to be invoked.  It is not the adjacency matrix (or connectivity matrix as it is called in neuroscience) that plays
a central role, but rather what we call the Markov matrix of the graph where each connecting directed edge
carries a certain probability weight.  The spectral theory of these matrices plays a central role in determining
many of the graph's properties such as the diameter of the network (see Theorem \ref{main} below).
Markov theory seems ideally suited to study a neural network since it embodies two of  its fundamental properties:
recurrence and emergence.  This latter property is often described poetically as
``the whole is greater than the sum of the parts.''
\par
  At the end of our paper, we try to package the information of the neural network
into a ``zeta function'', which is technically a matrix of functions.  The study of this zeta function,
motivated by analogies of other zeta functions in mathematics, may be of value in the evolution of a mathematical
theory of neural networks.  As we have tried to make this paper accessible to both mathematicians and
neuroscientists, we will give a short historical overview from the neuroscience perspective.
\section{\bf Historical background to the concepts of neural networks}
Our understanding of brain function over the last two hundred years owes a lot to the concept of the neuronal doctrine,  
 extending from the work of Santiago Ram\'on y Cajal and others.
 This states that the structural and functional unit of the nervous system is the individual neuron \cite{hubel2}.
 Through a reconstructive process involving single neuronal cells and their connections, some aspects of the design logic in the structure and function of neural pathways in the nervous system were deciphered. 
 The excessive focus on the single neuron, its electrophysiology and behavior however, detracts from the ability to develop a deeper understanding that would account for brain function in behavioural or cognitive states associated with health and disease \cite{2}. The neural connectome may behave in ways that are dependent on the nature and complexity of the network that have been acquired in the evolutionary process from a microscale with few hundred neurons to several millions in the human brain.  {\bf Thus, it is important to conceptualize a neuronal hub or nodes in a microscale network as involving individual or a few neurons in a network to specialized neuronal assemblies of groups of neurons or even regions that go to form a connectome in the human brain connectome. We will elaborate on this a little further in the next section.}
\par
Over the last hundred years the frameworks have moved from the conceptual topographically organized receptive fields in cortical ‘columns’  described by Mountcastle, by Hubel and Wiesel  to the idea that the single neuron was not only the anatomical and functional unit of the brain but also its perceptual unit. However, this approach could not explain the findings that neurons could be engaged in intrinsic and emergent functions unrelated to sensory stimulus or motor action.
It was Cajal's disciple,  Rafael Lorente de N\'o,   who argued that the structural design of many parts of the nervous system is one of recurrent connectivity, where neuronal signals can reverberate among groups of neurons \cite{no}.
 This idea was further developed by Donald Hebb \cite{hebb}, who proposed  in 1949  the idea of ‘cell assemblies’ where neural circuits worked by sequentially activating groups of neurons where recurrent and reverberating patterns of neuronal activation occurring within these closed loops, would be responsible for generating the various states of brain function.  
This foundational work is often described as being on par with Darwin's {\sl Origin of the Species.}
This work presented a neurological basis for human behavior and what is called
Hebb's law that ``neurons that fire together, wire together'' is now a fundamental mantra of neuroscience.
This law has also led to the neuroplasticity thesis of the human brain.  In essence, the brain has the capacity to change itself.
Further modification of these ideas came about with the development of electroencephalography and the identification of spontaneous oscillatory activity in neurons.
\par
The concept of a neural network is further supported by anatomical evidence that neuronal assemblies give rise to a connectivity matrix which encodes communication between neurons or neuronal assemblies.  That is, 
 each neuron or neuronal assembly receives inputs from many other neurons or neuronal assemblies while sending its outputs to large populations of cells and this process is encoded in the matrix.  The term ‘neural network’ model is one made up of distributed neural circuits in which neurons or
neuronal assemblies are abstracted into nodes and linked by connections that change through learning rules.  
\par
Neural network assemblies involve two basic types: feedforward networks, which are governed by one-way connections and recurrent networks, in which feedback connectivity is dominant.  Rather than fixed structure assemblies, it is more likely that a third type of neural network can be conceptualized to operate within biological organ systems in more fluid dynamic states (that are stochastic, not deterministic) permitting the allocation of weights to be asymmetric and exhibiting transient dynamical patterns without stable states \cite{maass}.
 The asymmetry in the synaptic connectivity matrix naturally endows these models with temporally organized activity without necessarily requiring an input signal.  Within such networks, individual neurons acquire flexibility in participating in different networks at different times.  This combinatorial flexibility \cite{hebb},  a consequence of synaptic plasticity, permits the modular composition of small assemblies into larger ones, within which neural circuits that are constantly in flux give rise to and encode emergent properties such as time, providing different time stamps to events as they arise \cite{maass}.  
\par
This suggests that we apply continuous time Markov theory to model the network.  But then, this would require explicit knowledge
of the transition probability functions, as functions of time.  We indicate below how the main theorem of continuous time
Markov theory can be used.  Future research in neuroscience could determine the nature of these transition functions
that can then be applied to determine hubs using such a model.  
\par
An approach to the analysis of neural networks and connectivity matrices in the brain as visualized by current neurophysiological and neuroimaging technologies (EEG,  high density arrays, magnetoencephalography, tractography and anisotropy, functional neuroimaging) is a mathematical one using graph theory that has become central to the identification of network hubs. 
In this paper, we inject both discrete and continuous theories of Markov chains to study brain networks.  We also propose
that one can go in the reverse direction and reconsider large  tracts of classical graph theory from the brain network perspective
and derive new results essential to our understanding of these vast  networks within networks  (``worlds within worlds'') that reside in the human brain.
Theorem \ref{main}  below  is an example of this symbiotic theme where eigenvalues of the Markov matrix (and not the adjacency matrix as is customary in spectral graph theory) are related to the diameter and radius of the graph.  

\section{\bf Neural networks in disease states: epilepsy as a network disorder}

Data from clinical, neuropathology, and imaging studies suggest that there is a pervasive disorganization of neural networks in patients with epilepsy \cite{bernhardt-2019}.
 Imaging studies have shown atypical hub organization in epilepsy relative to controls and associations between hub topography and cognitive dysfunction, and have suggested utility in hub mapping for postsurgical outcome prediction \cite{carboni} \cite{lariviere}.
\par
The patterns of disrupted organizations within networks in patients with epilepsy are described for patients with both focal and generalized epilepsy \cite{hong}.
 These are influenced by many variables:   patient age, age at seizure onset, and disease duration.  
This is
preliminary,  but compelling evidence indicates that large scale network abnormalities lead to the complex effects in the form of cognitive impairment symptoms in epilepsy, particularly in executive functioning, semantic and retrograde memory, and naming.
\par
While a regional loss in connectivity has been demonstrated in focal epilepsy (temporal lobe epilepsy, TLE), there appears to be measurable perturbations in other parts of the neural networks that show increased clustering and shorter path lengths 
\cite{bonhila}.  Despite regional connectivity loss, it is observed that small-world properties of the network remained spared. Network analysis has been effectively used to predict treatment resistance, response to surgery, failure of surgery and the effects on cognition and behavior (pre and post-surgery) \cite{lariviere}.

\section{\bf Graph theory as applied to neural networks and the concept of hubs}

The entire neural network and its connections in the human brain has been termed the ``connectome''.  
 Here the concept of the connectome is operative on a macroscale.  Nodes represent distinct neural elements, such as specialized neuronal assemblies and edges represent connections between nodes. 
Developmental processes determine how connections within this network develop resulting in certain neural assemblies within regions possessing a large number of network connections which are putatively marked as ``network hubs''. 
 In the human brain visualization of the microscale is not possible even when imaging studies with high resolution used conventionally. Using methods of white matter tractography and resting-state fMRI it has been possible to construct functional and structural connectomes. These are quantitative representations of brain architecture as neural networks, comprised of nodes and edges. The connectomes, typically depicted as matrices or graphs, possess topological properties that inherently characterize the strength, efficiency, and organization of the connections between distinct brain regions.

These nodes and hubs functionally and on account of their location serve to direct a large fraction of signaling traffic within the brain and are implicated in many neurological and psychiatric disorders such as epilepsy, autism etc.  \cite{royer}  
 Such hubs are vulnerable to pathological insults and consequently develop structural alterations that enhance network excitability.  
\par
Multicellular and macroscale phenomenology 
generated by such network arrangements and rearrangements and the corresponding emergent properties leading to an understanding of brain functions in health is currently a focus of neural network research.  See for example, 
\cite{shine} and \cite{farahani}.  
\par
The application of graph theory to analysis of network connectivity matrices has already progressed despite the many limitations of current methodologies employed and the variability of results across different modalities employed in studying connectivity within the brain.
In this context, a {\sl connectome} is defined as a ``comprehensive structural description of the network of elements and
connections of a given nervous system.''  Connectomics \cite{sporns} is then the branch of neuroscience focused on reconstructing
the connectome of various organisms.  
 To study the structure of these networks with a view to understand and explain the divergent findings from imaging studies, connectivity, and network analysis, several authors \cite{bernhardt}\cite{royer}
have introduced the mathematical theory of graphs as a framework.
  The ambitious ``human connectome project'' \cite{sporns} consists of  mapping all the neural connections of the brain.
\par
Towards this goal, the concept of ``hubness'' and ``hubs'' in various contexts is the first step.  
For instance, different forms of connectivity of the brain regions, such as structural connectivity and functional connectivity are
described in \cite{bernhardt}.  Using this modeling, many authors have 
defined the notion of {\sl hubs} as being ``brain regions with high connectivity to other parts of the brain''
and ``situated along the brain's most efficient communication pathways.''  
The mathematics of graph theory was then proposed as a means to identify the hubs using various measures of {\sl centrality}.
For instance, {\sl degree centrality}, {\sl betweenness centrality} and {\sl eigenvector centrality} can be measured using
 graph theory.  Eigenvector centrality in particular arises by associating to the neural network a discrete time Markov
 process (similar to the Google PageRank algorithm) and identifying the hubs as those nodes with the highest probabilities
arising from the stationary distribution of irreducible Markov chains.  In this model, a transition probability is associated to
each pair of nodes $i,j$ and is equal to the reciprocal of the  number of outgoing edges from $i$ to $j$.   (We are assuming
at most one edge between any pair of nodes.  If there are multiple edges, one can define the transition probability as the ratio
of the number of directed edges from $i$ to $j$ divided by the total number of directed edges emanating from $i$.)
Keeping in mind that our nodes represent neural assemblies, 
all synaptic assemblies are not the same
and as their efficiency is determined by their synaptic strength \cite{dobrunz},  the discrete Markov model may not completely identify  all the hubs.
We therefore propose a continuous time Markov model, taking into account the synaptic strength (which can be measured using
various neuroimaging and electrophysiological modalities) and as one evolving in time so as to accommodate neural plasticity.  This gives rise to a continuous Markov process which can be studied using methods different from the discrete case
(described below).  
As hubs are often implicated in the generation of epileptic seizures and may be involved in other brain disorders due to their vulnerability, a better delineation of these neuronal assemblies may lead to a more precise surgical intervention and/or the identification of targets that are affected selectively in disease states and lead to early recognition of such pathology in association with neuro-psychiatric symptoms.

\begin{center}
\includegraphics[width=\textwidth]{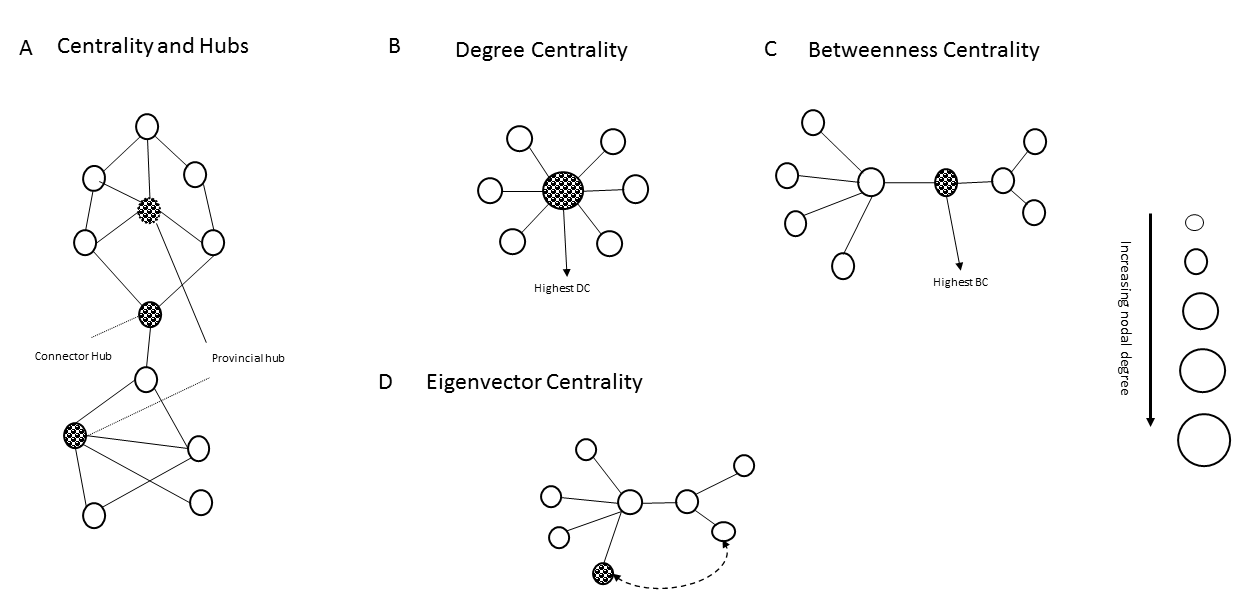}
Figure 1:  Various concepts of nodal centrality
\end{center}
\par $\quad $
\par
Let us briefly discuss degree centrality and betweenness centrality. (For basic notions of graph theory, the reader can refer to \cite{cm}.)
If we model the brain network using graph theory, the nodes (or vertices) of the graph $G$ are the brain regions, and the
directed edges represent connections between the regions as measured by various neuro-physiological measures and imaging
such as EEG, MEG, fMRI and MRI tractography.  
There is no loss of generality in assuming our graph $G$ is connected.  That is, given any two distinct nodes $x,y$, there is a directed path from $x$ to $y$.
In such a directed graph, one has the notion of a ``directed distance''
between two nodes $x,y$ as the length $d(x,y)$ of the shortest directed path from $x$ to $y$.  Following the usual terminology of
graph theory, the {\sl eccentricity} $e(x)$ of a node $x$ is simply
$$ e(x):= \max_{y\in G} d(x,y).  $$
The {\sl diameter} and {\sl radius} of the graph are the maximum and minimum nodal eccentricities respectively.  
One can use eccentricity as a means of identifying hubs and this leads to the notion of {\sl betweenness centrality}.  
\par
The {\sl outdegree} of a node $x$  is the number of edges emanating from $x$ and is denoted as $d^+(x)$.  The
{\sl indegree} is the number of edges entering into the node $x$  and is denoted $d^-(x)$.  If the graph is undirected,
$d^+(x)=d^-(x)$ and the {\sl degree} is then this common value.  Royer and others have modeled the brain network
as an undirected graph and introduced several measures of {\sl centrality} such as {\sl degree centrality} (nodes with the largest
degree),
{\sl betweenness centrality} (discussed above) and {\sl eigenvector centrality} (discussed below).  Each of these measures of centrality (see Figure 1) identifies hubs
in different ways \cite{royer}.    In this paper, we propose a new model of the brain network as a directed graph and a notion of centrality that emerges viewing this  network as both a discrete and continuous Markov process.  Hubs are then identified using these two perspectives.  

\section{\bf The discrete time Markov model}
In the simplest model discussed in the literature, two nodes $i$ and $j$ of the graph $G$ associated to the neural network of the brain are deemed adjacent if there is a directed edge from $i$ to $j$.  The discrete Markov process that is then associated to $G$ is
given by the matrix $P$ of transition probabilities $p_{ij}$ which is equal to the reciprocal of the outdegree of $i$.
If there is no directed edge from $i$ to $j$, the probability is set to zero. 
If there are multiple edges from $i$ to $j$, one can set the probability to be the total number of edges from $i$ to $j$ divided by
the outdegree of $i$.  
 It is convenient to call $P$ the Markov matrix associated to $G$.  Let us note that our matrix $P$ is {\sl row stochastic}, that is,
the sum of the entries in each row adds up to 1 since it represents the probability that the process moves from state $i$ to some other state.
(One can consider a more general situation where $P$ is
the matrix of transition probabilities which are not necessarily determined by the indegrees and outdegrees.)
\par
Our hypothesis that $G$ is strongly connected is equivalent to saying that we have 
what is called
an {\bf irreducible Markov chain}.  As the number of
neurons (or neuronal assemblies)  is finite, we conclude that all nodes are recurrent and non-null, in the sense of Markov theory (for
definitions of these terms, see for example, the lemma on page 225 of \cite{grimmett}).
In this setting, the fundamental theorem of Markov chains (see page 227 of \cite{grimmett}) applies and there is a unique stationary distribution $\bf v$
satisfying 
$$ {\bf v}P = {\bf v}.$$
This reduces the problem of determining hubs into a problem of linear algebra which is easily solved.
In other words,  the $i$-th coordinate of $\bf v$ encodes the probability that the process converges to $i$ in the limit.
Thus, hubs  can be identified as those nodes with the highest probabilities encoded in $\bf v$.  This is eigenvector centrality in a nutshell (and it is the same theoretical foundation for the Google PageRank Algorithm).  
\section{\bf The continuous time Markov model}
In the continuous model, we associate probability functions $p_{ij}(t)$ as a function of time $t$.  These could be functions giving the synaptic strength between
the neural assemblies indicated by $i$ and $j$.  
We thus get a matrix $P(t)$ whose entries are functions of time $t$.  
We will assume that our matrix entries are differentiable functions of $t$.
By the fundamental theorem of continuous time Markov chains, 
one can associate a 
matrix $Q$ (called the  generator matrix) such that 
$$P(t) = \exp( tQ), $$
and the stationary distribution $\bf v$ of $P(t)$ is determined by solving ${\bf v}Q=0$ (see pages 259-261 of  \cite{grimmett}).
This is again a problem of linear algebra.
The matrix $Q$ has a simple description as a matrix of {\sl transition intensities}:
$$q_{ii} = - p_{ii}'(0) = \lim_{t\to 0} {1-p_{ii}(t) \over t}, \qquad q_{ij} = q_{ii}p_{ij}'(0), \quad i\neq j. $$
Continuous time Markov theory then implies that ${\bf v}P(t)= {\bf v}$.  
In other words, the hubs can again be identified as those coordinates with the highest probabilities encoded by $\bf v$.  
This stationary distribution will (in general) be different than the one given by a discrete time Markov model.

\section{\bf The diameter of the brain network}

In this section, we will discuss the discrete case and the role the matrix $P$  plays as it relates to questions of classical graph theory.
In spectral graph theory, bounds for the diameter of a graph are often derived using eigenvalues of the adjacency matrix.
The above analysis, at least in the context of the brain network, suggests that it may be relevant to shift the focus and relate
the diameter to the eigenvalues of the Markov matrix $P$. As $P$ is a matrix of probabilities, it is a non-negative matrix
and the celebrated Perron-Frobenius theorem (which we state below) applies.   
As the sum of the entries of each row of $P$ equals  1, we see that the column vector {\bf J} consisting of all 1's is a (right)  eigenvector of $P$ with eigenvalue 1.  For an irreducible, aperiodic chain, the Perron-Frobenius theorem (see page 240 of
 \cite{grimmett})
implies that all the other eigenvalues have absolute value strictly less than 1.  We call these 
the non-trivial eigenvalues and define 
$$\rho := \text{ the maximum absolute value of the non-trivial eigenvalues of  } P.$$  
\begin{thm}[Perron-Frobenius]\label{pf}  Suppose that $P$ is a non-negative matrix which is irreducible.  Then $P$ has a real, positive eigenvalue
$\lambda_1$ with the following properties. 
\begin{enumerate}[{\rm (a)}] 
\item  Corresponding to $\lambda_1$, there is an eigenvector {\bf x}  all of
whose elements may be taken as positive.  
\item  Any other eigenvalue $\alpha$ satisfies $|\alpha|\leq \lambda_1$.
\item  $\lambda_1$ increases when any element of $P$ increases.  
\item  $\lambda_1$ is a simple eigenvalue of $P$.  
\item  $\lambda_1\leq \max_j \left( \sum_{k} a_{jk} \right), \qquad \lambda_1 \leq \max_k \left( \sum_{j} a_{jk} \right). $
\item  If $P$ has exactly $t$ eigenvalues of maximum modulus $\lambda_1$, then there is a non-singular matrix $C$ such that
$$ C^{-1}PC = \left( \begin{matrix} \Lambda & 0 \\ 0 & J \end{matrix}\right) $$
where $\Lambda$ is a $t\times t$ diagonal matrix ${\rm diag}(\lambda_1, \lambda_1 \zeta_t, ..., \lambda_1 \zeta_t^{t-1})$
with $\zeta_t=e^{2\pi i/t}$ and $J$ is a Jordan matrix whose diagonal elements are all strictly less than $\lambda_1$ in modulus.
\item  If $P$ is aperiodic, then $t=1$.
\end{enumerate}
\end{thm}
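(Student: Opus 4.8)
The plan is to prove this as the classical Perron--Frobenius theorem, organized around the Collatz--Wielandt variational characterization of $\lambda_1$. The first step is a reduction to strictly positive matrices: irreducibility of the non-negative $n\times n$ matrix $P$ means that for each ordered pair $(i,j)$ there is a directed path of length at most $n-1$ from $i$ to $j$ in the graph recording the nonzero entries of $P$, so $(I+P)^{n-1}$ has all entries strictly positive; and $\mathbf{x}$ is an eigenvector of $P$ with eigenvalue $\lambda$ precisely when it is an eigenvector of $(I+P)^{n-1}$ with eigenvalue $(1+\lambda)^{n-1}$. Next, for a non-negative nonzero column vector $\mathbf{x}$ define $f(\mathbf{x}) = \min\{(P\mathbf{x})_i/x_i : x_i\neq 0\}$ and set $\lambda_1 = \sup_{\mathbf{x}\ge 0,\ \mathbf{x}\neq 0} f(\mathbf{x})$. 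Since $f(c\mathbf{x})=f(\mathbf{x})$ and $P\mathbf{x}\ge f(\mathbf{x})\mathbf{x}$ forces $f((I+P)^{n-1}\mathbf{x})\ge f(\mathbf{x})$, the supremum is unchanged if taken over the image of the probability simplex under $\mathbf{x}\mapsto(I+P)^{n-1}\mathbf{x}$, a compact set of strictly positive vectors on which $f$ is continuous; hence it is attained at some $\mathbf{z}>0$.

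From here (a), (b) and (e) come quickly. If $P\mathbf{z}-\lambda_1\mathbf{z}$ were non-negative but nonzero, applying the strictly positive matrix $(I+P)^{n-1}$ would make it strictly positive, so $f(\mathbf{w})>\lambda_1$ for $\mathbf{w}:=(I+P)^{n-1}\mathbf{z}$, contradicting the definition of $\lambda_1$; hence $P\mathbf{z}=\lambda_1\mathbf{z}$, giving (a), and $\lambda_1>0$ because irreducibility forbids a zero row of $P$. For (b), an eigenpair $P\mathbf{y}=\alpha\mathbf{y}$ gives $|\alpha|\,|\mathbf{y}|=|P\mathbf{y}|\le P|\mathbf{y}|$, hence $f(|\mathbf{y}|)\ge|\alpha|$ and $|\alpha|\le\lambda_1$. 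For (e), choosing an index $m$ with $z_m=\max_k z_k$ gives $\lambda_1 z_m=\sum_k a_{mk}z_k\le z_m\sum_k a_{mk}\le z_m\max_j\sum_k a_{jk}$, and dividing by $z_m$ yields the row-sum bound; the column-sum bound follows by applying the same reasoning to $P^{T}$ (also non-negative and irreducible, with the same spectral radius $\lambda_1$), producing along the way a strictly positive left eigenvector $\mathbf{u}$ with $\mathbf{u}P=\lambda_1\mathbf{u}$.

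For (d), observe first that any non-negative nonzero $\lambda_1$-eigenvector is strictly positive, by the same $(I+P)^{n-1}$ argument. If $\mathbf{z}'$ were a real $\lambda_1$-eigenvector not proportional to $\mathbf{z}$, then sliding $c$ until $\mathbf{z}-c\mathbf{z}'$ first touches the boundary of the positive orthant produces a non-negative nonzero $\lambda_1$-eigenvector with a vanishing coordinate, a contradiction; so the eigenspace is one-dimensional. That $\lambda_1$ is moreover algebraically simple follows by pairing a hypothetical generalized eigenvector $\mathbf{w}$ with $(P-\lambda_1 I)\mathbf{w}=\mathbf{z}$ against $\mathbf{u}$: one gets $0=\mathbf{u}(P-\lambda_1 I)\mathbf{w}=\mathbf{u}\mathbf{z}>0$. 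Monotonicity (c) also follows from the characterization: if $P'\ge P$ entrywise with some entry strictly larger, then $\lambda_1(P')\ge f_{P'}(\mathbf{z})\ge f_P(\mathbf{z})=\lambda_1(P)$, and equality would force $P'\mathbf{z}=\lambda_1\mathbf{z}=P\mathbf{z}$, i.e. $(P'-P)\mathbf{z}=0$, impossible since $P'-P\ge 0$ is nonzero and $\mathbf{z}>0$.

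The main obstacle is part (f), with (g) an immediate corollary: extracting the cyclic structure of the eigenvalues of maximal modulus. The plan is to show that if $P\mathbf{y}=\alpha\mathbf{y}$ with $|\alpha|=\lambda_1$, then the equality case of the triangle inequality forced by $|P\mathbf{y}|=P|\mathbf{y}|=\lambda_1|\mathbf{y}|$ rigidly constrains the phases of the coordinates of $\mathbf{y}$; writing $\mathbf{y}=D\mathbf{z}$ for a diagonal unitary $D$ shows $\lambda_1^{-1}D^{-1}PD$ is again non-negative with the same Perron data, and iterating the construction partitions the vertex set into $t$ classes cyclically permuted by $P$, where $t$ is the period (index of imprimitivity) of the graph. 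Conjugating by the permutation sorting vertices into these classes, together with the diagonal scalings, produces the block form, with the $t$ peripheral eigenvalues $\lambda_1,\lambda_1\zeta_t,\dots,\lambda_1\zeta_t^{t-1}$ each simple (seen by inspecting the block-diagonal primitive structure of $P^t$) and hence collected into the diagonal $\Lambda$, while all remaining eigenvalues, of modulus strictly below $\lambda_1$, go into the Jordan matrix $J$. Finally (g): aperiodicity of the chain means the $\gcd$ of the lengths of its closed directed walks is $1$, i.e. $t=1$, so $\Lambda=(\lambda_1)$ and $\lambda_1$ is the unique eigenvalue of maximal modulus. Since this cyclic-structure analysis is the only genuinely intricate ingredient, one reasonable option is to quote it from a standard reference (e.g. Seneta or Gantmacher) rather than reproduce it in detail.
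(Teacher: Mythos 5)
The paper does not prove this statement: Theorem \ref{pf} is quoted there as the classical Perron--Frobenius theorem, with the relevant facts attributed to standard references (e.g.\ page 240 of Grimmett--Stirzaker), so there is no in-paper argument to compare against. Your proposal is the standard textbook proof via the Collatz--Wielandt variational characterization, and it is essentially correct: the reduction to the primitive matrix $(I+P)^{n-1}$, the attainment of the supremum of $f$ on a compact set of positive vectors, and the derivations of (a), (b), (c), (d), (e) are all sound. Two small points to tighten. First, the claim that $\mathbf{x}$ is an eigenvector of $P$ with eigenvalue $\lambda$ \emph{precisely when} it is an eigenvector of $(I+P)^{n-1}$ with eigenvalue $(1+\lambda)^{n-1}$ overstates the converse (an eigenvector of a polynomial in $P$ need not be an eigenvector of $P$); fortunately only the forward implication is used. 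Second, part (f) is genuinely the hard part and your treatment is a sketch: the key rigidity step is that an eigenpair $P\mathbf{y}=\alpha\mathbf{y}$ with $|\alpha|=\lambda_1$ forces $|\mathbf{y}|$ to be the Perron vector and then, via equality in the triangle inequality, $P=e^{i\theta}DPD^{-1}$ for a diagonal unitary $D$, whence the peripheral spectrum is a cyclic group of rotations of $\lambda_1$, each eigenvalue simple, yielding the block form. You correctly identify this and propose to cite Seneta or Gantmacher for the details, which is entirely reasonable and matches the paper's own practice of quoting the result; if the theorem were to be proved in full, that cyclic-structure analysis is the one piece that would need to be written out.
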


Before proving our theorem, we 
recall the notion of the {\bf norm} of a matrix $A$.  
This is defined as 
$$ ||A||:= \max_{x\neq 0} {||Ax|| \over ||x||},$$
where $||x||$ denotes the usual Euclidean norm of a vector.
If $A^T$ is the transpose of $A$, then it is easy to see that $||A||=||A^T||$.  
The {\bf condition number},
denoted $\kappa(A)$ of
an invertible matrix $A$ is defined as
$$\kappa(A):= ||A||  \, ||A^{-1}||$$
where $||A||$ is the norm of the matrix $A$.  The condition number of $A$ is invariant
under permutation of its columns and re-scaling them.  If $A$ is any
non-singular $n\times n$ matrix with real entries, then $A^{T}A$ is 
symmetric and positive definite.  
Therefore, the eigenvalues of $A^TA$ are all real and positive.  
The {\bf singular values} of $A$
are then defined to be the positive square roots of the eigenvalues of
$A^TA$ and we can order them as $\sigma_1 \geq \sigma_2 \geq \cdots 
\geq \sigma_n >0$.  The condition number of $A$ can be shown to be equal
to $\sigma_1/\sigma_n$ (see for example pages 280-286 of \cite{strang}).

With this interlude, we now apply the Perron-Frobenius theorem to prove:

\begin{thm}  \label{main} Let $G$ be a connected graph with Markov matrix $P$ which is aperiodic
with simple eigenvalues.    Let {\bf v} be its
unique stationary distribution with components $v_j$.  Then $v_j>0$ for every $j$.
If $C$ is the matrix whose columns are a basis of eigenvectors of $P$ and 
$\kappa(C)$ is the conditon number of $C$, then
 eccentricity of any node $i$  is bounded by $\max_j (\log v_j)/(\log \rho)$.  Thus, the diameter and radius of $G$ are bounded by
 $$ \max_j {\log [\kappa(C)/v_j] \over [\log 1/\rho]} .$$
\end{thm}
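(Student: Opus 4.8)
The plan is to extract both claims from the spectral decomposition of $P$, exploiting that $P^{n}$ converges to the rank-one matrix $\mathbf J\mathbf v$ at a geometric rate measured by $\rho$ and the conditioning $\kappa(C)$ of the eigenvector matrix. The key link with graph distance is the elementary observation that $(P^{n})_{ij}>0$ if and only if $G$ has a directed walk of length $n$ from $i$ to $j$, so that $d(i,j)\le n$ as soon as the $(i,j)$ entry of $P^{n}$ is forced to be positive by the convergence estimate.

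First I would settle the positivity of $\mathbf v$. Since $P$ is non-negative, irreducible and aperiodic, Theorem~\ref{pf} applied to $P^{T}$ shows that the Perron eigenvalue $1$ is simple (part (g), aperiodicity, gives $t=1$) and that the left $1$-eigenspace of $P$ is spanned by a vector with strictly positive entries; normalising its entries to sum to $1$ produces $\mathbf v$, so $v_j>0$ for all $j$. Next I would normalise $C$: because $P\mathbf J=\mathbf J$ and $\kappa(C)$ is unchanged under rescaling the columns of $C$, I may take the first column of $C$ to be $\mathbf J$. Reading off the first row of $C^{-1}P=DC^{-1}$ shows that the first row $\mathbf r$ of $C^{-1}$ is a left $1$-eigenvector of $P$, hence $\mathbf r=c\mathbf v$ for a scalar $c$, and $c=\mathbf r\mathbf J=(C^{-1}C)_{11}=1$. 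The payoff is that the rank-one spectral projector is then \emph{exactly} $C\,\diag(1,0,\dots,0)\,C^{-1}=(\text{first column of }C)(\text{first row of }C^{-1})=\mathbf J\mathbf v$, so that the stationary probabilities $v_j$ themselves — not some rescaled eigenvector — appear below.

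With this normalisation in place, write $P=CDC^{-1}$ with $D=\diag(1,\lambda_2,\dots,\lambda_n)$ and $|\lambda_k|\le\rho<1$ for $k\ge 2$, so that
$$P^{n}-\mathbf J\mathbf v=C\,\diag(0,\lambda_2^{n},\dots,\lambda_n^{n})\,C^{-1},$$
and therefore $\|P^{n}-\mathbf J\mathbf v\|\le\|C\|\,\|C^{-1}\|\,\rho^{n}=\kappa(C)\rho^{n}$. Since each entry of a matrix is bounded in modulus by its operator norm, $(P^{n})_{ij}\ge v_j-\kappa(C)\rho^{n}$; thus $(P^{n})_{ij}>0$ whenever $\kappa(C)\rho^{n}<v_j$, i.e.\ (taking logarithms and using $\log(1/\rho)>0$) whenever $n>\log[\kappa(C)/v_j]/\log(1/\rho)$. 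Hence $d(i,j)$ is at most the least integer exceeding $\log[\kappa(C)/v_j]/\log(1/\rho)$; maximising over $j$ bounds $e(i)$, and then taking the maximum and the minimum over $i$ bounds the diameter and the radius of $G$, all by $\max_j \log[\kappa(C)/v_j]/\log(1/\rho)$.

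The one genuinely delicate point is the normalisation step: everything hinges on the rank-one part of the spectral decomposition being precisely $\mathbf J\mathbf v$ and not $\mathbf J$ times some other multiple of the left Perron eigenvector, and this is where simplicity of the eigenvalue $1$ (so that the left $1$-eigenvector is unique up to scale) is used in an essential way, while the hypothesis that \emph{all} eigenvalues are simple is what lets us write $P=CDC^{-1}$ at all. After that, the norm estimate through $\kappa(C)$, the entrywise comparison, and the translation of $(P^{n})_{ij}>0$ into a length-$n$ walk are all routine; the only loss of precision is the mild rounding (the bound is really for the least integer exceeding the displayed quantity), which is presumably also why the clean form of the intermediate eccentricity bound should carry $\kappa(C)/v_j$ rather than $v_j$ inside the logarithm.
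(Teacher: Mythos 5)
Your proof is correct and follows essentially the same route as the paper: diagonalize $P=CDC^{-1}$, show $P^{n}$ converges to $\mathbf J\mathbf v$ with error at most $\kappa(C)\rho^{n}$, and translate positivity of $(P^{n})_{ij}$ into the existence of a length-$n$ walk. The only differences are cosmetic — you bound the error matrix by its operator norm where the paper applies Cauchy--Schwarz entrywise, you get $v_j>0$ from Perron--Frobenius on $P^{T}$ where the paper gives a direct argument via $\mathbf v=\mathbf vP^{n}$, and your normalisation pinning the spectral projector to exactly $\mathbf J\mathbf v$ is in fact slightly cleaner than the paper's.
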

\begin{proof}  Let $G$ have $M$ nodes.  Since $G$ is connected, our matrix $P$ is irreducible.  The aperiodicity of $P$
implies that $t=1$ in the Perron-Frobenius Theorem \ref{pf}.  As all the eigenvalues are simple, there is a basis of eigenvectors.
As $P$ is row stochastic, it  has eigenvalue 1 with corresponding eigenvector {\bf J}.  By
(e), this must be the largest eigenvalue and by
  (b), all the other eigenvalues are
strictly less than 1 in absolute value.  Thus, $\lambda_1=1$.  
Let $C$ be the matrix whose columns are a basis of eigenvectors of $P$ ordered so that the first column is {\bf J}.  
 Thus, $PC=C\diag(\lambda_1, ..., \lambda_M)$ so that 
\begin{equation}\label{diag}
C^{-1}PC = \diag(\lambda_1, ...,\lambda_M), 
\end{equation}
where $\lambda_1 =1$ and $\lambda_2, ..., \lambda_M$ are the non-trivial eigenvalues.  
A routine induction argument shows
$$C^{-1} P^n C = \diag(\lambda_1^n, ..., \lambda_M^n)$$
and consequently
$$P^n = C \diag(\lambda_1^n, ..., \lambda_M^n) C^{-1}. $$
Writing $P^n = [p_{ij}^{(n)}]$, 
$C= [c_{ij}]$ and $C^{-1}= [b_{ij}]$, we have by the rules of matrix multiplication that
\begin{equation}\label{limit}
p_{ij}^{(n)} = \sum_{k=1}^M c_{ik} \lambda_k^n \, b_{kj}. 
\end{equation}
It is easy to see that
\begin{equation}\label{cauchy}
 \sum_{j=1}^M |c_{ij}|^2 \leq ||C||^{2} , \qquad \sum_{k=1}^M |b_{kj}|^2 \leq ||C^{-1}||^{2}\qquad \forall \,  i,j. 
\end{equation}
Indeed, the left hand side of both inequalities in (\ref{cauchy})
can be viewed as the norms of the vector $e_iC$ and $Be_j$ where
$e_i$ and $e_j$ denote the standard basis vectors with zero entries except in the $i$-th position and $j$-th position (respectively) where it is equal to 1.  Then (\ref{cauchy}) is immediate
from the definition of the norm and the observation that $||C^T||=||C||$. 

Noting that $\lambda_1=1$ and 
all the other non-trivial eigenvalues have absolute value strictly less than 1, we see from (\ref{limit}) that
$$ \lim_{n\to \infty} p_{ij}^{(n)} = c_{i1}b_{1j}. $$
Recalling that the first column of $C$ is the eigenvector {\bf J} (rescaled), we see that $c_{i1}$ does not depend on $i$
and is a fixed scalar.  
We can therefore write $v_j=c_{i1}b_{1j}$.
Thus, we see from (\ref{limit}) that
\begin{equation*}
p_{ij}^{(n)} = v_j + \sum_{k=2}^M c_{ik} \lambda_k^n b_{kj}.  
\end{equation*}
Applying the Cauchy-Schwarz inequality to the sum on the right and using (\ref{cauchy}), we deduce
\begin{equation}\label{final}
|p_{ij}^{(n)} - v_j | \leq \kappa(C)\rho^n.  
\end{equation}
The components of the vector ${\bf v} = (v_1, ..., v_M)$ are the stationary probabilities of the process and so the components are
non-negative and add up to 1.  
The Perron-Frobenius theory however implies that the $v_j$'s are all strictly positive.  This can be seen very easily  without 
appealing to the theory as follows.  
From (\ref{diag}), we have $C^{-1}P=\diag(\lambda_1, ..., \lambda_M) C^{-1}$ and the first row of $C^{-1}$
is the vector $(b_{1j})$.  Keeping in mind that $\lambda_1=1$, the equation $C^{-1}P=\diag(\lambda_1, ..., \lambda_M) C^{-1}$
implies that the row vector ${\bf v} = (v_1, ..., v_M)$ is a left eigenvector of $P$ with eigenvalue 1.  Hence, ${\bf v}P={\bf v}$
and by induction ${\bf v}P^n = {\bf v}$ for all $n\geq 1$.  
As the sum of the $v_i$'s equals 1, there is a $k_0$ such that $v_{k_0}>0$.   From ${\bf v} = {\bf v}P^n$, we have
\begin{equation}\label{keys}
 v_j = \sum_{k=1}^M v_k  p_{kj}^{(n)} \geq v_{k_0}p_{k_0 j}^{(n)}
\end{equation}
for any $n$.   As our graph is connected,  there is an $n_0$ such that $p_{k_0j}^{(n_0)} >0$. Choosing $n=n_0$ in (\ref{keys}), we deduce $v_j>0$.  
Returning to (\ref{final}), we see that
$p_{ij}^{(n)}>0$ if $v_j - \kappa(C)\rho^n >0$ which is the case if $1/v_j < \kappa(C)^{-1}(1/\rho)^n$.  In other words, if 
$n > (\log \kappa(C)/v_j)/\log (1/\rho)$, we have $p_{ij}^{(n)}>0$, which means there is a path of length $n$ from $i$ to $j$.
Since the eccentricity of a node $x$  is the maximum of $d(x,y)$ as $y$ ranges over the nodes of the graph $G$,
the eccentricity is bounded by 
$$ \max_j {\log [\kappa(C)/v_j] \over [\log 1/\rho]} .$$
As the diameter and radius are the maximum and minimum of the nodal eccentricities, the last assertion is immediate.
This completes the proof.
\end{proof}
From  the expression of Theorem \ref{main} as
$$ {\log \kappa(C)/v_j\over \log 1/\rho }, $$
we see
that to minimize the eccentricity and thus bound the radius, one needs first  to maximize the denominator and so minimize $\rho$.
We then need to minimize the numerator and this means to maximize $v_j$.  In other words,
the ``leader hub'' can be used to bound the radius.
\par
Another remark is that if $C$ is an orthogonal matrix, which is the case
if $P$ is symmetric, then $\kappa(C)$ is equal to 1.  It is possible to
replace $\kappa(C)$ with a ``spectral bound'' depending only on the eigenvalues
of $P$.  This work involves another approach and is the subject matter
of the forthcoming paper \cite{carter-murty}.
\par
The problem of classification of graphs in which $\rho$ is minimized
is a new problem of graph theory and has not been studied before except in some very special cases like regular graphs
in which the degree of every node is the same.  But in this case, the adjacency matrix and the Markov matrix are essentially
the same apart from a scaling factor.  Thus, the study of brain networks using graph theory leads to some new and
interesting problems in mathematics hitherto unexamined.
\par
One additional comment is worth noting.  In our theorem, we assumed that all the eigenvalues were simple.
A random matrix has simple eigenvalues and so generically, this is not  a stringent hypothesis.  One can
treat the multiple eigenvalue case but the details are a bit more involved.
\par
\section{\bf Application of the model to the nematode \textbf{\textsl{ Caenorhabditis    elegans}}}
As a preliminary application of the theory, we applied our discrete Markov model  to the connectivity matrix of the neural network of the microscopic roundworm {\sl C elegans} which is available
from the Dynamic Connectome Lab\footnote{https://www.dynamic-connectome.org/resources}. 
This is one of the few organisms for which a complete description of the neural network is known (see page 82 of \cite{sporns}).
The 277 $\times$ 277 matrix is
easily converted  into a Markov matrix as indicated above.  There are only two non-zero stationary probabilities, each with
value very close to $1/2$  corresponding
to the PVCL and PVCR neurons whose functions are related to forward locomotion and responding to any harsh touch to the tail
respectively.  These are the hubs of {\sl C elegans}.  
\par
In a recent computational study using the published connectome of the C elegans hermaphrodite as a combined (i.e., chemical synapses and gap junctions), directed, and weighted network identified 12 critical neurons out of 279 neurons in the network using a vulnerability analysis methodology.  
These 12 critical neurons include the two  that we identified using the associated Markov matrix algorithm.
 These two are interneurons; they control elements that display a high vulnerability score, i.e their loss has a significant effect on global efficiency in the computerized analysis conducted in the study \cite{kim}.
An entire issue of the Proceedings of the Royal Society (Biological Sciences) was dedicated to the neuroscience related to this organism \cite{larson}.
\par
We have also explored the applicability of the proposed approach to the neural network connectome of Drosophila
(fruitfly) \cite{winding} to identify critical neuronal constituents. The findings are the subject of our next paper to demonstrate the applicability of the proposed mathematical approach using the Markov theory to more complex connectomes than the C elegans connectome.

\section{\bf The zeta function of a finite network }
The quantity $\rho$, defined as the maximum absolute value of the non-trivial eigenvalues of $P$ played an important
role in Theorem \ref{main}.  This is related to the ``spectral gap'' phenomenon well-known in the theory of Ramanujan graphs and expander graphs
as well as the theory of zeta functions in number theory. 
The concept of a zeta function to study complex phenomenon has been a fundamental idea in number theory, algebraic geometry and graph theory.  Inspired by these analogies, 
a  similar concept can be introduced in the context of neural networks.  Indeed, if given any (finite) directed graph $G$,
with associated Markov matrix $P$, 
we define the zeta function of $G$ as the matrix function
$$Z_G(t) := \exp\left( \sum_{n=1}^\infty {P^n t^n \over n} \right) . $$
This resembles many zeta functions that appear in algebraic geometry and related areas.  But let us note that
our zeta function above is a matrix function.  The proof of 
Theorem \ref{main}  shows that this is a matrix of rational functions.  Unlike spectral graph theory which relates
eigenvalues of the adjacency matrix to properties of the graph, the zeta function $Z_G(t)$ relates eigenvalues of 
the Markov matrix $P$ to hubs of the network.  We already saw this phenomenon in Theorem \ref{main}.
\par
It may also be useful to consider the ``dual graph'' $G^*$ of the directed graph $G$ where now the nodes are the same
but the direction of each edge is reversed.  The associated Markov matrix is then denoted $P^*$
with $\rho^*$ being the maximum of the absolute values of the non-trivial eigenvalues of $P^*$.
   The spectral analysis of $P^*$ 
can be carried out as we did with $P$ and one can relate $\rho^*$ to the study of the diameter of $G$.
The stationary vector associated with $P^*$ encodes the probabilities of the process starting at any given node.  
\par
It has also been found that the theory of random graphs is not applicable to understanding the neural network of the brain
\cite{albert}.  For instance, in the case of random graphs, the degree distribution is a Poisson law.
By contrast, the neural network of the brain seems to follow a power law.  Empirically, it looks like there is a constant
$s$ such that $P(X=k) \asymp k^{-s}$, where $X(v)$ is the random variable giving the degree of a node $v$.
What this suggests is that the degree of random node follows a zeta distribution in that 
$$P(X=k) = \zeta(s)^{-1}k^{-s} $$
for some suitable $s$ with $\zeta(s)$ denoting the Riemann zeta function.   This signals a need for a  careful study of graphs whose degree distribution follows a power law.  In particular, if $a_k$ is a sequence of non-negative real numbers and
$$F(s) := \sum_{k=1}^\infty {a_k \over k^s} $$
is its associated Dirichlet series which converges for $\Re(s)>1$, then the study of random variables $X$ whose
distribution is given by
$$P(X=k) = F(s)^{-1} {a_k \over k^{s}}, $$
seems relevant.    These are  topics for future research.
\section{\bf Discussion}
We have presented a new model to identify brain network hubs that may be of use in the deeper understanding and treatment of epilepsy
and other disorders.    As various
modalities of measurement are available, it should be possible to clinically test or at least simulate the validity of this hypothesis.
The nature of functions describing synaptic strength is studied in \cite{trensch}.  These may be relevant in our continuous time Markov model.  
Graph theory can thus be used in these studies in a fundamental way.  In this paper, we highlighted the importance of Markov theory,
but it is becoming increasingly clear that other deeper chapters of mathematics can be injected into the study of the brain, most
notably, the theory of expander graphs.  
By all descriptions, the wiring of the brain connectome is not random nor regular but ``rather characteristic of a small world'' \cite{bernhardt}.  The network exhibits properties of a subclass of graphs called {\sl expander graphs} which have few edges and yet have high connectivity,  ``an
architecture that enables both the specialization and the integration of
information transfer at relatively low wiring costs''  \cite{bernhardt}.
On the other hand, our Theorem \ref{main}   suggests that brain network theory gives rise to new questions in 
 graph theory that have not been studied before.
 This opens up a new symbiosis between neuroscience and graph theory.
$$\quad $$
\noindent {\sl Acknowledgements.}  We thank Professors Garima Shukla (Division of Neurology, Faculty of Medicine, Queen's University, Kingston, ON, Canada),
Siddhi Pathak (Chennai Mathematical Institute), Seoyoung Kim (Postdoctoral Fellow, Mathematisches Institut
Georg-August Universit\"at G\"ottingen), Sunil Naik (Postdoctoral Fellow, Queen's University),
 Rebecca Carter (Graduate student, Queen's University)  for their helpful comments and Nic Fellini (Graduate student, Queen's University) for the linear algebra verification related to the connectivity matrix
of {\sl C elegans}. 
$$\quad $$

\end{document}